\newtheorem{theorem}{Theorem}
\newtheorem{proposition}[theorem]{Proposition}
\newenvironment{proof}[1][Proof]{\noindent\textbf{#1.} }{\hfill $\Box$\\[2mm]} %\rule{0.5em}{0.5em}\\}
\newcounter{linenumber}
\def\L{\ensuremath{\mathcal{L}}}
\def\Nat{\ensuremath{\mathbb{N}}}
\def\argmin{\mathit{argmin}}
\def\ipart{\infty\text{-}\mathit{part}}
\def\part{\mathit{part}}
\def\countf{\mathit{countf}}
\def\lastf{\mathit{lastf}}
\newcommand{\correct}{\mathit{correct}}
\newcommand{\strcorrect}{\textit{str-correct}}
\newcommand{\true}{\mathit{true}}
\newcommand{\false}{\mathit{false}}
\newcommand{\remove}[1]{}
\newcommand{\ignore}[1]{}
\begin{document}
\bibliographystyle{abbrv}

\title{Strong Equivalence Relations for Iterated Models}

\author{
Zohir Bouzid$^1$
%\thanks{Contact author: 46 rue Barrault, INFRES,
%  T\'el\'ecom ParisTech, 75013 Paris, France,
%  zohir.bouzid@telecom-paristech.fr}
~~~~Eli Gafni$^2$~~~~Petr Kuznetsov$^1$\\
\\
\\
\large $^1$ T\'el\'ecom ParisTech\\
\large $^2$ UCLA
}

\date{}

\maketitle

\begin{abstract}

The Iterated Immediate Snapshot model  (IIS), due to its elegant
geometrical representation, 
has become standard for applying topological reasoning
to distributed computing.
Its modular structure makes it easier to analyze than 
the more realistic
(non-iterated) read-write Atomic-Snapshot memory model (AS).
It is known that AS and IIS are 
equivalent with respect to \emph{wait-free task} computability: a distributed task is solvable in AS if and only if
it solvable in IIS. 
%However, in order to investigate tasks solvability under general adversaries and computability of long-lived objects, 
%this equivalence is not sufficient and a stronger one is needed.
We observe, however, that this equivalence is not sufficient in
order to explore solvability of tasks in \emph{sub-models} of AS (i.e. proper subsets of
its runs) or computability of \emph{long-lived} objects, and a
stronger equivalence relation is needed.  

In this paper, we consider \emph{adversarial} sub-models of AS and IIS
specified by the sets of processes that can be \emph{correct} in
a model run.
%[[PK too detailed
% and \emph{strongly correct} processes, respectively.
%]]
We show that AS and IIS are equivalent in a strong way:
a (possibly long-lived) object is implementable in AS under a given adversary 
if and only if it is implementable in IIS under the same adversary. %This holds whether the object is one-shot or long-lived.
Therefore, the computability of any object in shared memory under an
adversarial AS scheduler can be equivalently investigated in IIS.
%[[PK maybe add?
%, where a generalized asynchronous computability was recently characterized.
%]]

%To achieve this result, we 
%present a two-way simulation protocol that provides 
%an equivalent sub-IIS model for any sub-AS model
%while ensuring that the set of \emph{correct} processes in an AS run coincides with the
%set of \emph{strongly correct} processes in the simulated IIS run, and vice
%versa. 

%In this paper, we focus on relations between IIS and the more realistic
%(non-iterated) read-write Atomic-Snapshot memory model (AS).
%It is known that AS and IIS are, in a precise sense,
%equivalent: a distributed task is wait-free solvable in AS if and only if
%it wait-free solvable in IIS. 
%However, in order to investigate computability under general adversaries using IIS instead of AS, 
%this equivalence is not sufficient and we need a stronger one.

%%%Our simulation protocols are \emph{wait-free} in the sense that
%%%they deliver the promised guarantees without making any
%%%assumptions on the model runs.

%This paper rectifies this by presenting a two-way wait-free simulation protocol that establishes equivalence 
%between models of IIS and AS that are specified by which sets of live processors require an output.
%Our algorithms even guarantee a stronger property:  the set of live processes is preserved between 
%the simulating and simulated runs.
\end{abstract}

%\begin{center}
%{\bf [Regular paper]}
%\end{center}

%\newpage
%\pagenumbering{arabic}\setcounter{page}{1}

\section{Introduction}
Iterated memory models (see a survey in~\cite{Raj10})
proved to be a convenient tool to investigate and
understand distributed computing.
%Their modular structure makes them much easier to analyze than non-iterated models. 
%: without them reasoning
%becomes hard and messy.    
%
In an iterated model, processes pass through
a series of disjoint communication-closed memories $M_1$, $M_2$,
$\ldots$. 
The most popular one is the \emph{Iterated Immediate Snapshot} model (IIS)~\cite{BG97}. 
Processes access the memories one by one, each time invoking the
\emph{immediate snapshot} operation~\cite{BG93a} that writes to the
memory and returns a snapshot of the memory contents.
Once memory $M_k$ is accessed, a process never comes back to it.
IIS has many advantages over the more realistic
(non-iterated) read-write Atomic-Snapshot memory model (AS)~\cite{AADGMS93}.
Its modular structure makes it considerably easier to analyze algorithms and prove their correctness.
Moreover, its nice geometrical representation \cite{Koz12, Lin10}
makes it suitable for  topological reasoning.
It is natural therefore to seek for a generic trasformation that would
map any problem in AS to an \emph{equivalent} problem in IIS.

It has been shown by Borowski and Gafni~\cite{BG97} that the complete
sets of runs of IIS and AS are, in a strict sense,
\emph{equivalent}:  a distributed task is (wait-free) solvable in AS if and only
if it is (wait-free) solvable in IIS.  
They established the result by presenting a \emph{forward simulation} that, in every AS run,
simulates an IIS run~\cite{BG93a}, and a \emph{backward simulation} that, in every IIS
run, simulates an AS run~\cite{BG97}. 
The equivalence turned out to be instrumental, e.g., in deriving
the impossibility of wait-free set agreement~\cite{BG93b,HS99}. More
generally, the equivalence enables the topological characterization of
task solvability in AS~\cite{HS99,HKR14}.  

However, in order to investigate computability of \emph{long-lived} objects or 
solvability of tasks in \emph{sub-models} of AS (i.e., proper subsets of its runs),  
this equivalence turns out to be insufficient.
%this equivalence is not sufficient and a stronger one is needed.
%solvability of tasks under a general adversary and not only the wait-free one, 
The goal of this paper is to establish a stronger one using 
elaborate model simulations.

We focus on \emph{adversarial} sub-models of AS~\cite{DFGT11,Kuz12},
specified by sets of processes that can be \emph{correct} in
a model run. 
Note that the original AS model is described by the adversary 
consisting of \emph{all} non-empty sets of processes.  
Since the introduction of adversaries in~\cite{DFGT11}, the models have become popular
for investigating task computability~\cite{GK10,GK11,HR13}.
But how to define an IIS ``equivalent'' for an adversarial AS sub-model? 
%[[PK sounds awkward 
%for which several fundamental results were obtained~\cite{GK10,GK11,HR13}.
%]]

In IIS, a correct yet ``slow'' process may be never noticed by other
processes: a process may go through infinitely many memories
$M_1,M_2,\ldots$ without appearing in the snapshots of any other process. 
Instead, we specify adversarial sub-models of IIS using the sets of
\emph{strongly correct} processes~\cite{RS12} (sometimes also referred
to as \emph{fast} processes~\cite{GKM14-podc}).
Informally, a process is strongly correct in an IIS run if it belongs to the largest set of
processes that ``see'' each other infinitely often in the run.
%In a recent paper~\cite{GKM14-podc}, the authors present 
A topological characterization of task computability in sub-IIS models
has been recently derived~\cite{GKM14-podc}: given a task $T$ and an
IIS sub-model $M$, topological conditions for solving $T$ in $M$ are provided.
Is this characterization relevant for sub-AS models also or is it specific to IIS?

%In recent years, many results were obtained regarding computability of tasks in shared memory under adversarial schedulers \cite{GK10,GK11,HR13}.

%Our simulations also preserve the set of processes considered to be
%participating in the original run,  
% which enables an elegant characterization of
%task computability in sub-IIS models~\cite{GKM14-podc}.  
%Given a sub-model of AS, is it possible to match it with an equivalent sub-model of IIS ?
%

In this paper, we show that the answer is ``yes''. 
We show that sub-models of IIS and AS that are governed by the
same adversary are equivalent in a strong sense:
An object is implementable in AS under a given adversary if and only if it is implementable 
in IIS under the same adversary. This holds regardless of whether the
object is one-shot, like a distributed task, or long-lived, like
a queue or a counter.
To achieve this result, we present a two-way simulation protocol that provides 
an equivalent sub-IIS model for any sub-AS model
and which guarantees that the set of correct processes in an AS run coincides with the
set of strongly correct processes in the simulated IIS run, and vice
versa:

%In summary,  this paper establishes the equivalence between adversarial sub-IIS and sub-AS models 
%by proposing a two-way simulation between IIS and AS as follows: 

%\begin{itemize}
\vspace{2mm}
\noindent
$\bullet$ We propose an ``AS to IIS'' simulation 
%%%that we call  $\L\F$ (for live-to-fast). 
%  that guarantees
% that the set of live processes in the standard model appears as the
% set of fast processes in the IIS model.
%
%%%In the simulation, we make sure that a live process is noticed
which ensures that a correct (in AS) process is ``noticed''
infinitely often by other correct (in AS) processes in the simulated IIS run,
even if the process is much slower than the others.
To this goal,  we simulate IIS steps with the RAP
(Resolver Agreement Protocol)~\cite{GK11} and employ 
a ``fair'' simulation strategy---at each point, we first try to promote  
the most ``left behind'' process in the currently simulated run.
Even if the RAP-based simulation ``blocks'' because of a disagreement
between the simulators (unavoidable in asynchronous fault-prone
systems~\cite{FLP85}), we guarantee that the 
blocked process is eventually noticed by more advanced simulated
processes.
% A live simulator eventually unblocks each of its simulated
%steps, maintaining that every live process appears as fast.    

\vspace{2mm}
\noindent
$\bullet$ To obtain our ``IIS to AS' 
simulation, 
%called $\F\L$ (for fast-to-live), 
%$\F\L$ 
%that turns the set
%  of fast processes into the set of live processes, 
   we extend the multiple-shot IS simulations~\cite{BG93a} with a
   ``helping''  mechanism, reminiscent to the one employed in the atomic-snapshot
  simulation~\cite{AADGMS93}. 
   Here even if a process $i$ is not able to complete its simulated read, it may adopt
  the snapshot published by a concurrent process $j$, under the condition
  that $j$ has seen the most recent write of $i$. Since every move by
  a strongly correct process
  is eventually seen by every other strongly correct process, we derive the
  desired property that every strongly correct process makes progress in the
  simulated run. 
%This is reminiscent
 % of the helping mechanism employed in the atomic-snapshot
 % simulation~\cite{AADGMS93}.   

%\end{itemize}
\vspace{2mm}

Equating the set of correct processes with the set of strongly correct processes in
the iterated simulated run is illuminating, because our algorithms 
%$BS$ and $FS$
%$\L\F$ and $\F\L$
provide an iterated equivalent to any adversarial
model~\cite{DFGT11,Kuz12}.
Our simulations also preserve the set of processes considered to be
participating in the original run,  
 which motivates the recent topological characterization of
task computability in sub-IIS models~\cite{GKM14-podc}.  
 
An important property of our simulation algorithms is that they are
model-independent, i.e., they deliver the promised guarantees without making any
assumptions on the model runs. In this sense, the algorithms are
\emph{wait-free}. 

%make no assumptions on when and where
%failures are allowed to take place and, thus, work under any
%adversarial model~\cite{DFGT11,Kuz14}.

\vspace{1mm}\noindent\textbf{Roadmap.}
%The rest of the paper is organized as follows.
Section~\ref{sec:related} relates our results to earlier work. 
Our model definitions, including the discussion of the AS and IIS
models, the definition of strongly connected %(or fast)
processes in IIS, and the definition of a simulation, are given in Section~\ref{sec:model}.
Sections~\ref{sec:lf} and~\ref{sec:fl} present our two-way simulation. 
%A couple of proofs are delegated to the optional appendix. 

\section{Related work}
\label{sec:related}

The IIS model introduced by Borowsky and Gafni~\cite{BG97} has become standard in
topological reasoning about distributed computing~\cite{HS93,BG93b,BG97,HS99,HKR14}.
The IIS model is precisely captured by the
standard chromatic subdivision of the input
complex~\cite{Lin10,Koz12}, and thus enables intuitive and elegant
reasoning about its computability power, in particular, distinguishing
solvable and unsolvable.  
The IIS model is equivalent to the classical read-write model with
respect to (wait-free) task solvability~\cite{BG93a,BG97,GR10-opodis,RS12}. 

On the one hand, Borowsky and Gafni~\cite{BG93a} have shown that one
round of IIS can be implemented wait-free in AS, thus
establishing a wait-free simulation of multi-round IIS.
But the simulation only ensures that \emph{one} correct process appears as strongly correct
in the IIS run.  
Our algorithm ensures that \emph{every} correct
processes appear as strongly correct in the simulation.

On the other hand, IIS can simulate AS in the non-blocking manner, i.e., making sure that at least
one process that participates in infinitely many rounds of IIS manages
to simulate infinitely many steps of AS~\cite{BG97}. 
Later, Gafni and Rajsbaum~\cite{GR10-opodis} generalized the simulation
of~\cite{BG97} to $\L$-resilient adversaries~\cite{GK11}.
It guarantees that at least one set in $\L$ will appear correct in the
simulated execution.
Raynal and Stainer~\cite{RS12} 
presented an extension of the simulation in~\cite{BG97} and sketched  
a proof that the extension simulates a run in which each the set of
correct process in the simulated AS run is equal to the set of
strongly correct processes in the ``simulating'' IIS run.
In this paper, we propose an algorithm that achieves this property
using the idea of the original atomic-snapshot
implementation by Afek et al.~\cite{AADGMS93}, which we believe to be more intuitive and simpler to understand.

%[[PK redundant?
%Therefore, both our algorithms maintain the equality
%between the set of correct processes in AS to the set of strongly correct processes
%in IIS. 
%Therefore, the simulations can be used to  establish equivalence
%between any models expressed via sets of correct, resp. strongly correct, processes.   
%]]

The relations between different simulation protocols are summarized in
the following table: 

\vspace{1mm}
\hspace{-5mm}
\begin{tabular}{|l|c|c|} 
   \hline
    & correct(AS) $\subseteq$ str-correct(IIS)? & str-correct(IIS) $\subseteq$ correct(AS)?  \\
   \hline
    \hline
    \textbf{From AS to IIS} & \multicolumn{2}{c|}{} \\
    \hline
    \small{Borowsky and Gafni~\cite{BG93a}} & \footnotesize{$\exists p \in \text{correct(AS)}: p \in \text{str-correct(IIS)}$}  & \checkmark \\
    
    \textbf{This paper} & \checkmark & \checkmark \\
    \hline
   \hline
   \textbf{From IIS to AS} & \multicolumn{2}{c|}{} \\
   \hline
   \small{Borowsky and Gafni~\cite{BG97}} & \checkmark & \footnotesize{$\exists p \in \text{str-correct(IIS)}: p \in \text{correct(AS)}$}  \\
   
   \small{Gafni and Rajsbaum~\cite{GR10-opodis}} & \checkmark & \footnotesize{$\exists X \subseteq \text{str-correct(IIS)}:  X \subseteq \text{correct(AS)}$ } \\
  
    \small{Raynal and Stainer~\cite{RS12}} & \checkmark & \checkmark \\
    
   \textbf{This paper} & \textbf{\checkmark} & \checkmark \\
    \hline

\end{tabular}
\vspace{1mm}
\normalsize

%An informal definition of a strongly correct process in IIS was 
%proposed in~\cite{Gaf98-iis}. 
%Strongly correct processes in iterated models
%were recently studied by Raynal and Stainer
%in~\cite{RS12,RS13} for iterated message-passing models restricted by
%message adversaries~\cite{AG13} or failure detectors~\cite{CHT96}.
%Together with a companion paper~\cite{GKM14-podc},
%this paper studies the notion of a strongly correct
%process in generic sub-IIS.
%, streamlining the intuitive definition proposed
%in~\cite{Gaf98-iis}. 
An informal definition of a strongly correct process in IIS was 
proposed by Gafni in~\cite{Gaf98-iis} and formally stated by Raynal and Stainer
in~\cite{RS12,RS13}.
The equivalence between adversarial restrictions of
AS and IIS we establish in this paper motivated formulating 
a generalized topological characterization of task computability in sub-IIS~\cite{GKM14-podc}.  

Our AS-to-IIS simulation presented in Section~\ref{sec:lf} offers a
novel use of the Resolver Agreement Protocol (RAP) proposed
in~\cite{GK11}, where a set of simulators try to maintain the balance
between the simulated processes by promoting the ``most
behind'' process that is not ``blocked.'' 
Our IIS-to-AS simulation presented in Section~\ref{sec:fl} is based on
the non-blocking simulation of~\cite{BG97}, with the helping mechanism
similar to the one used in the original atomic snapshot
construction~\cite{AADGMS93}.    

Herlihy and Rajsbaum~\cite{HR12-sim} considered the problem of 
simulating task solutions in a variety of models, but their
results only concern colorless tasks, which boils down to a very
restricted notion of simulation. 
%Essentially, the results
%n~\cite{HR12-sim} assume a restricted notion of a simulation which
%boils down to finding a map. 
%
Rajsbaum et al.~\cite{RRT08} introduced the Iterated \emph{Restricted}
Immediate Snapshot (\emph{IRIS}) framework, where the restriction is defined via a specific
failure detector on the per-round basis.
% (not necessarily affecting the
%set of strongly correct processes).
% and does not necessarily affect the set of strongly correct
%processes in an IIS run. 
%A run in an IRIS model may be cast
%to an AS run with an arbitrary correct set, which seems to make the IRIS framework
%less convenient in relating iterated and non-iterated models than our
%adversarial sub-IIS.     

\section{Definitions}
\label{sec:model}

In this section, we recall how the standard read-write and
IIS models are defined, discuss
the notion of a strongly correct process in the IIS model, and explain what we
mean by simulating one model in another.

\vspace{1mm}\noindent\textbf{Standard shared-memory model.}
We consider a standard atomic-snapshot model (\textit{AS}) in which a
collection $\Pi=\{1,\ldots,n\}$ 
of processes communicate via atomically updating their distinct registers in the memory and taking
atomic snapshots of the memory contents.
AS is equivalent to the standard read-write shared-memory model~\cite{AADGMS93}. 
Without loss of generality, we assume that every process writes its
input value in the first step and then alternates taking snapshots with
updating its register with the result of it latest snapshot. 
This is known as a \emph{full-information} protocol.
We say that a process \emph{participates} in a run $E$ if it performs at
least one update operation.
Let $\part(E)$ denote the set of participating processes in $E$. 
%The \emph{participating set} of a process $i$ in a run $E$, 
% denoted by $\part(E,i)$, is the set of
%processes witnessed by $i$ participating in $E$, i.e., appearing in at
%least one snapshot taken by $i$ in $E$.   
A process $i$ is \emph{correct} (or\emph{live}) in $E$ if $i$ takes infinitely many
steps in $E$. Let $\correct(E)$ denote the set of live processes in $E$. 
%Clearly,
 
\vspace{1mm}\noindent\textbf{IIS model and strongly correct processes.} 
%\label{sec:model:iis}
%
In the IIS memory model, each process is supposed to go through a
series of independent memories $M_1$, $M_2$, $\ldots$.
Each memory is accessed by a process with a single \emph{immediate snapshot}
operation~\cite{BG93a}.
% which implies the \emph{Iterated Immediate
 % Snapshot} model IIS.

A run $E$ in IIS is a sequence of non-empty sets
of processes $S_1 \supseteq S_2 \supseteq \dots$, with each $S_r
\subseteq \{1, \dots, n\}$ consisting of those processes that
participate in the $r$th iteration of immediate snapshot
(IS). Furthermore, each $S_k$ is equipped with an ordered partition:
$S_r = S_r^1 \cup \dots \cup S_r^{n_r}$ (for some $n_r \leq n$),
corresponding to the order in which processes are invoked in the
respective IS. 
%Let $\Runs$ be the set of runs in IIS.
 
Fix a run $E = S_1,S_2, \ldots$. The processes $i \in S_1$ are called {\em participating}.  
If $j$ appears in all the sets $S_k$, we say that $j$ is
{\em infinitely participating} in $E$. 
The sets of participating and infinitely participating processes in a run $E$ are denoted $\part(E)$ and $\ipart(E)$, respectively.

If $i\in S_r$ ($i$ \emph{participates in round $r$}), let $V_{ir}$ denote the set of processes appearing  in $i$'s $r$-th \emph{snapshot}
in $E$, defined as the union of all sets in the partition of $S_r$
preceding and including $S_r^{m}\subseteq S_r$ such that $i\in S_r^{m}$: $V_{ir}= S_r^1 \cup \dots \cup S_r^m$.  
%($V_{ir}=\bot$ if $\level(i,r,E)=\bot$).  
It is immediate that for all processes $i,j$ and rounds $r$, such that $i$ and $j$
participate in $r$, the following properties
are satisfied~\cite{BG93a}:
(self-inclusion) $i \in V_{ir}$;
(containment) $V_{ir}\subseteq V_{jr}$ $\vee$ $V_{jr}\subseteq V_{ir}$; and 
(immediacy) $i \in V_{jr}$ $\Rightarrow$ $V_{ir}\subseteq V_{jr}$.

%%%Suppose that either $r=0$ or $i \in S_r^j \subseteq S_r$ for some
%%%$r\geq 1$ and $j\in\{1,\ldots,n_r\}$. 
%%%A set called the \emph{$r$th view of ${i}$ in the $E$},
%%%$\view_E(i, r)$ (we will omit subscript $E$ for brevity) is defined, recursively,
%%%as follows: (i) $\view({i}, 0)=\{i\}$; and (ii) for $r \geq 1$, $\view({i}, r)=\{\view({s}, {r-1}) \mid s \in V_{ir} \}$.

%%%The \emph{participating set of a process $i$ in a run $E$},  denoted by $\part(E,i)$, is the set of
%%%processes witnessed by $i$ participating in $E$, i.e., appearing in at
%%%least one view obtained by $i$ in $E$: $part(E,i)=\{j | \exists r,r':~\view(j,r)\in\view(i,r')\}$.  

Our definitions can be interpreted operationally as follows.
$S_r$ is the set of processes accessing memory $M_r$, and each $S_r^j$ is the set of processes
obtaining the same \emph{snapshot} after accessing $M_r$.
Recall that in IS, the view of a process $i\in S_r^j$ is defined by the values
written by the processes in $S_r^1 \cup \dots \cup S_r^j$.     

%%%We say that a run $E' = S_1', S_2', \dots$ is an {\em
%%%  extension} of a run $E = S_1, S_2, \dots$, and we write $E\leq E'$,
%%%if (1) $S_j\subseteq S_j'$ for all $j$, and (ii) the views of the
%%%processes in $\part(E)$ are the same in $E'$ as in $E$. 
%This defines a partial order on $\Runs$.
% we write $r \leq r'$ if $r'$ is an extension of $E$.

%%%If $E$ is a run, we let $\minimal(E)$ be the smallest run $E_0$ such
%%%that $E_0 \leq E$ (that is, for all $E' \leq E$, we have $E_0 \leq
%%%E'$). It is easy to see that a unique $\minimal(E)$ exists for any $E$.

%%%Now we define $\strcorrect(E)=\ipart(\minimal(E))$. 
%We define $\slow(E)$ to be the complement set of $\strcorrect(E)$. 
%%%%%
%%%%Intuitively, $\strcorrect(E)$ is the largest set of processes that ``see''
%%%%each other (appear in each other's views) infinitely often in $E$. 
%Formally, we say that $j$ \emph{sees $i$ in round $r\geq 1$} \fbox{if}
%$j\in V_{ir}$. 

%\vspace{1mm}\noindent\textbf{Graph-based characterization of fast processes.}
%
It is convenient then to define, for each round $r$ of $E$, a directed graph $G_E^r$
with processes that participate in $r$ as nodes and a directed edge from $i$ to
$j$ if $j \in V_{ir}$.
$G^{(r)}_E$ is the union of the graphs $G_r$,$G_{r+1}, \ldots$

We say that \emph{process $i$ is aware of round $r$ of process $j$} 
in an IIS execution $E$ 
if there exists a path from $i$ to $j$ in $G^{(r)}_E$.

The \emph{participating set of a process $i$ in a run $E$},  denoted by $\part(E,i)$ (or mayby $aware(i,E)$?), is the set of
processes that $i$ is aware of their first round.

A process is \emph{strongly correct} (or
\emph{fast}~\cite{GKM14-podc}) in $E$ if  in every round
by every process in $\ipart(E)$. Let $\strcorrect(E)$ denote the set of
strongly correct processes in $E$.
Intuitively, $\strcorrect(E)$ is the largest set of processes that ``see''
each other (appear in each other's views) infinitely often in $E$.
Formally, denote by $G^{*}_E$ the graph limit $\lim\limits_{r \to \infty} G^{(r)}_E$.
That is, $i$ is a vertex of $G^{*}_E$ if it is in $\ipart(E)$
and $(i, j)$ is an edge of $G^{*}_E$ if
$E$ contains infinitely many
rounds $r$ such that $j\in V_{ir}$, i.e., $i$ is aware of  infinitely
many rounds of $j$.

%$\strcorrect'(E)$
Let $SC(E)$ be
the set of processes in the strongly connected component of
$G^{*}_E$.
By the containment property of IIS snapshots,
in every round $r$, either 
$i\in V_{jr}$ or $j\in V_{ir}$. Hence, for all $i,j\in\ipart(E)$,
we are guaranteed that $G^{*}_E$ contains 
at least one of the edges $(i, j)$ and $(j, i)$.
Therefore, $G^{*}_E$ has a single sink. 
%Define $R_E$ as the first round that satisfies $G^{(R)}_{E}=G^{*}_{E}$. 
In the following, we use some properties of strongly correct processes:

%%%\begin{proposition}
%%%\label{prop:fast-strong-connect}
%%%For all $E$ in IIS, $\strcorrect(E)=SC(E)$.
%%%\end{proposition}

\begin{proposition}
\label{prop:fast-graph-fast}
For all $E$ in IIS, $i \in \strcorrect(E)$
iff 
%[[PK is this correct?
there exists $r_0$, such that for all $r\geq r_0$, 
%there exists an infinity of rounds $r$ such that
$G^{(r)}_E$ contains a path between every process in $V_{ir}$ and $i$.
%]]
%\textbf{PK: is this correct? YES}
\end{proposition}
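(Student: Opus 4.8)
The plan is to identify $\strcorrect(E)$ with $SC(E)$, the unique sink strongly connected component of the limit graph $G^{*}_E$, and to read the statement's ``path between every process in $V_{ir}$ and $i$'' as the existence, for each $p\in V_{ir}$, of a directed path from $p$ to $i$ in $G^{(r)}_E$ (the $i\to p$ direction is trivially present, since $p\in V_{ir}$ is the edge $i\to p$ itself). I would first record the structural facts the excerpt already supplies: the vertices of $G^{*}_E$ are exactly $\ipart(E)$, and by containment every pair of them carries at least one edge, so the condensation of $G^{*}_E$ is acyclic with an arc between each pair of components --- a total order whose unique sink is $SC(E)$. In particular every vertex of $G^{*}_E$ has a path to every vertex of $SC(E)$.

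The whole argument hinges on one bridging lemma relating the tail graphs $G^{(r)}_E$ to $G^{*}_E$, and this is the step I expect to be the main obstacle. The easy half is that any edge $u\to v$ of $G^{*}_E$ (i.e.\ $v\in V_{us}$ for infinitely many $s$) lies in \emph{every} $G^{(r)}_E$, so any path in $G^{*}_E$ persists in all tails. The delicate converse I would prove is: if a directed path from $u$ to $v$ exists in $G^{(r)}_E$ for arbitrarily large $r$, then a path from $u$ to $v$ exists in $G^{*}_E$. The idea is that $\Pi$ is finite, hence there are only finitely many simple $u$-to-$v$ paths, so by pigeonhole a single simple path $\pi$ recurs in $G^{(r)}_E$ for arbitrarily large $r$; for each edge $x\to y$ of $\pi$, recurrence unfolds to ``for every $R$ there is $s\ge R$ with $y\in V_{xs}$,'' which is exactly the condition for $x\to y$ to be an edge of $G^{*}_E$, placing all of $\pi$ inside $G^{*}_E$.

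For the forward direction, assuming $i\in SC(E)$, I would exploit that the $S_r$ form a nested decreasing chain of subsets of a finite set and therefore stabilise: pick $r_0$ with $S_r=\ipart(E)$ for all $r\ge r_0$. Then for $r\ge r_0$ every $p\in V_{ir}\subseteq S_r=\ipart(E)$ is a vertex of $G^{*}_E$, so by the total-order structure it has a path to $i$ in $G^{*}_E$, and by the easy half of the bridging lemma this path survives in $G^{(r)}_E$; this is precisely the required path condition.

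For the converse I would assume the path condition with witness $r_0$. Since it refers to $V_{i,r_0}$, process $i$ participates in round $r_0$, hence (by nesting) in all earlier rounds, while the hypothesis forces participation in all rounds $\ge r_0$; thus $i\in\ipart(E)$. Supposing for contradiction $i\notin SC(E)$, I would pick any $b\in SC(E)$: the $G^{*}_E$-edge between $i$ and $b$ cannot point $b\to i$ (a sink vertex reaches only the sink), so $i\to b$ holds, meaning $b\in V_{ir}$ for infinitely many $r$. Applying the path hypothesis at these arbitrarily large $r$ gives a $b$-to-$i$ path in $G^{(r)}_E$ for arbitrarily large $r$, and the bridging lemma then yields a $b$-to-$i$ path in $G^{*}_E$ --- impossible, since the sink vertex $b$ cannot reach $i\notin SC(E)$. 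Hence $i\in SC(E)=\strcorrect(E)$.
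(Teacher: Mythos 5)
Your proof is correct and follows essentially the same route as the paper's: the forward direction rests on the snapshots $V_{ir}$ eventually falling inside $\ipart(E)$ plus the persistence of every $G^{*}_E$-path in each tail graph $G^{(r)}_E$, and the backward direction combines the containment property with exactly the paper's pigeonhole argument on the finitely many simple paths to transfer recurrent tail-paths into $G^{*}_E$. The only organizational difference is that you run the converse as a contradiction through a single vertex $b$ of the sink component, whereas the paper argues directly that every process of $\ipart(E)$ reaches $i$; the ingredients are identical, and your explicit statement of the bridging lemma just makes precise what the paper asserts in one line.
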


\begin{proof}
\begin{itemize}
\item[$\Rightarrow$]
Let $i\in \strcorrect(E)$. 
Since $\strcorrect(E) \subseteq \ipart(E)$, $i$ belongs also to $\ipart(E)$.
Take $r_0$ as the first round such that for all $r \geq r_0: V_{ir} \subseteq \ipart(E)$.
$r_0$ is well defined since the processes not belonging to $\ipart(E)$ can appear only finitely often in the snapshots of $i$.

Since $i$ is strongly correct, for all $r \geq r_0$, $G^{(r)}_E$ contains a path between every process $j \in \ipart(E)$ and $i$.
But as $r \geq r_0$, $V_{ir} \subseteq \ipart(E)$.
Hence, $G^{(r)}_E$ contains a path between every process of $V_{ir}$ and $i$.

%Therefore, there exists an infinity of rounds $r$ where $V_{ir} \subseteq \ipart(E)$.
%Moreover, as $i$ is strongly correct, 
%we know that $G^{(r)}_E$ contains a path between every process $j \in \ipart(E)$ and $i$.
%But we showed that $V_{ir} \subseteq \ipart(E)$.
%Hence, $G^{(r)}_E$ contains a path between every process of $V_{ir}$ and $i$.
%Since there are infinitely many such rounds $r$, the claim follows.

\item[$\Leftarrow$]
Let $i$ be a process and $r_0$ a round such that for all $r\geq r_0$,
$G^{(r)}_E$ contains a path between every process in $V_{ir}$ and $i$.
We need to prove $i \in \strcorrect(E)$.

Note that the containment property of IS snapshots guarantees that 
every process $j \in \ipart(E) \setminus V_{ir}$
obtains a snapshot that contains $V_{ir}$. 
That is, $(j, i) \in G^{r}_E$ and hence $(j, i)\in G^{(r)}_E$.

Thus, we conclude that $G^{(r)}_E$ contains a path between every process in $\ipart(E)$ and $i$.

Since there are infinitely many such rounds $r$ and the number of possible paths is bounded, it follows
that $G^{*}_E$ contains a path between every process in $\ipart(E)$ and $i$.
Consequently, $i \in \strcorrect(E)$.
\end{itemize}
\end{proof}

%Alternatively, we can define the set of fast processes in $E$ as follows.
%Let $SC(E)$ be the set of processes $i$ 
%for which there exists an infinty of rounds $r$ such that every process in  
%in $S_{ir}$ 
%witnesses $i$ in $r$.
%such that for every round $r$, there exists $r'\geq r$ where all the processes in $S_{ir'}$ 
%see $i$ in $r'$.

\vspace{1mm}\noindent\textbf{Model simulations.}
In this paper we focus on models in which every process in a set
$1,\ldots,n$ alternates
writes with taking snapshots of (iterated or non-iterated) memory,
using the result of its latest snapshot (or its input value
initially) as the value to write.
Notice that the updates do not return any meaningful response, just
an indication that the operation is complete.
Thus, the evolution of the snapshot of a process $i$ in a run $E$ of such a
model is characterized by the sequence $V_{i,1}^E,V_{i,2}^E,\ldots...$
of the snapshots it takes in $E$. 

By simulation of a run of a model $B$ in another model $A$, we
naturally mean a distributed algorithm that in every run of $A$
outputs at every process a sequence of snapshots so that all these
sequences are consistent with some run of $B$ and, moreover, reflect the inputs
of $A$.  The latter intuitively filters out any ``fake'' simulation that
produces a run of $B$ that has nothing to do with the original run of $A$. 

Formally, in every run $E$ of $A$, a simulation $\textit{Sim}_{A,B}$ outputs, at
every simulator $i\in \{1,\ldots,n\}$ a (finite or infinite) sequence of snapshot values
$U_{i,1},U_{i,2},\ldots...$.       
There exists a run $E'$ of $B$ such that:
\begin{itemize}
\item For all $i$, $V_{i,1}^{E'},V_{i,2}^{E'},\ldots$ is exactly $U_{i,1},U_{i,2},\ldots...$;  
%\item $E$ and $E'$ agree on the input of each process $i$;
\item for every $i\in\correct(E)$ (resp., $\strcorrect(E)$ if $A$ is an IIS model), $\part(E,i)=\part(E',i)$.  
\end{itemize}
For the sake of brevity, we assume that in the simulated algorithm,
as its local state, each process $i$ simply maintains a vector storing the number of
snapshots collected by every other process it is aware of so far. 
The process writes the vector as its current state in write operation.
Each time a new snapshot is taken, the process updates its vector and 
simply increments its number of steps in it.
Initially, the vector of process $i$ stores $1$ in position $i$ and
$0$ at every other position.    
The reader can easily convince herself that this simplification does
not bring a loss of generality, i.e., provided a simulation for
such an algorithm, we can derive a simulation for the full-information
algorithm.

\section{From AS to IIS: resolving and bringing to the front}
\label{sec:lf}

The goal of this section is to provide an algorithm
$\textit{AS}\rightarrow\textit{IIS}$ that simulates an
execution of an IIS model where the set of
processes that appear strongly correct coincides with the set of correct processes
(Algorithm~\ref{alg:lf}).

\vspace{1mm}\noindent\textbf{Overview.} 
For each iteration of the IIS model, the processes use the original IS
implementation~\cite{BG93a}.
To ensure fairness of the simulation, each 
process tries to advance the process that is currently the
most behind.

%For each iteration of the IIS model, the processes use the original IS
%implementation~\cite{BG93a} with \emph{helping}.
%Here helping consists in each process trying to advance the process that is currently the
%most behind.

Recall that the IS construction~\cite{BG93a} involves $n$ recursive
levels, $n$ down to $1$, where at each level $\ell$,  every process
registers its participation and then takes an atomic snapshot.
If the size of the snapshot is less than $\ell$, then the process
recursively proceeds to level $\ell-1$, otherwise it returns the
snapshot as its output in the IS simulation.
Since at most $n$ processes start at level $n$ and at least one
process (the one that writes the last) drops the simulation at each
level, at most $\ell$ processes can reach level $\ell$.
 
In our $\textit{AS}\rightarrow\textit{IIS}$ algorithm, in order to promote the next step of a given
process, the simulators use an
\emph{agreement protocol}~\cite{BG93b,Kuz13} for each level of the IS simulation~\cite{BG93a}.
More precisely, to simulate the atomic snapshot taken by the process in
level $\ell$, 
the simulator takes an atomic snapshot itself to compute
the set of other simulated processes that also reached level $\ell$.
If the cardinality of the set is exactly $\ell$, then the simulator
proposes $1$ to the agreement algorithm. Otherwise, it proposes $0$.    
If the agreement protocol returns $1$, then the simulated process completes the IS
iteration by outputting the set of $\ell$ processes in level $\ell$.
If the agreement protocol returns $0$, the process gets down to level
$\ell-1$ in the current IS iteration.
  
To make sure that the simulation is safe, i.e., the simulators indeed agree
on the outcome of the simulated step, we use the recently proposed  \emph{Resolver Agreement
  Protocol (RAP)}~\cite{GK11}.
This protocol guarantees agreement (no two processes output different
values) and validity (every output value was previously proposed).
Moreover,  if all proposed values are the same, then the algorithm
terminates. 
This feature is implemented using the \emph{commit-adopt} (\emph{CA})
algorithm~\cite{Gaf98}. 
Otherwise, if two different values are proposed,  
the agreement algorithm may block.
The blocked state can be \emph{resolved} by the simulated process itself: 
the simulated process writes the value it adopted from CA in a
dedicated register so that every correct process would eventually read
the value and terminate.   
%Thus, the protocol is guaranteed to terminate if all the proposed values are
%the same or the simulated process is correct.   

Formally, the RAP 
exports one operation $\textit{propose}(v)$, $v\in \{0,1\}$ 
that returns a value in $\{0,1,\bot\}$
%(that can be invoked multiple times by any process) 
and is associated
with a unique \emph{resolver} process.
%The propose operation returns a value in $\{0,1,\bot\}$. 
%Assuming that every correct process invokes \textit{propose}
%and no process invokes proposes different values, 
The following guarantees are provided:    
%\begin{enumerate}
(i) Every returned non-$\bot$ value is a proposed value; 
(ii) If all processes propose the same input value, then 
  no process returns $\bot$;
(iii) The resolver never returns $\bot$;	 
(iv) No two different non-$\bot$ values are returned.
%We employ the read-write implementation of the RAP
%abstraction~\cite{GK11} with the additional feature of returning $\bot$
%in case the protocol is blocked.    

\vspace{1mm}\noindent\textbf{Operation.} 
Algorithm~\ref{alg:lf} operates as follows.
Every process maintains a shared vector $R[i]$, written by $i$ and
read by all, that stores $i$'s perspective on the current simulation. 
In particular, the sequence of iterations $r$ and levels $\ell$ that a process $j$ has passed
through, \emph{as witnessed by $i$}, is stored in $R[i,j]$. 
%We say that $(r,\ell)<(r',\ell')$ if $r<r'$ or $r=r'$ and $\ell>\ell'$. 

After taking a snapshot $S$ in line~\ref{line:snapshot} of the current
simulated state, the simulator $i$ first checks if the simulated
process $i$ is \emph{blocked} (line \ref{line:ifblocked}).
A process $p$ is considered blocked if for every $S[j,p]$ that 
contains $(d,r,\ell)$ with 
$(r,\ell)=\textit{round-level}(p,S)$, we have $d=\textit{blocked}$. 
If simulated process $i$ is blocked in the simulation, simulator $i$
retrieves the round-level $(r,\ell)$ at which it is blocked (line \ref{line:retrieveLevel})
and participates in $\textit{RAP}_{i,r,\ell}$.
We assign $i$ to be the \emph{resolver} of each RAP instance
$\textit{RAP}_{i,r,\ell}$, and thus the instance returns a non-$\bot$
which ``unblocks'' simulated process $i$.

If simulated process $i$ is not blocked, simulator
$i$ checks if some process $j$ has completed a \emph{new} (not
considered by $i$ in previous rounds of the simulation) round $r_j$,
such that all processes in $V_{jr_j}$ are aware of round $r_j$ of $j$ (line \ref{line:condFreeze}). 
Every such process $j$ is then \emph{frozen} by $i$, i.e., $j$ is put
on hold and not simulated until simulator $j$ 
performs a ``physical'' step (in lines~\ref{line:move1} or~\ref{line:move2}).

\remove{
				After taking a snapshot $S$ in line~\ref{line:snapshot} of the current
				simulated state, the simulator first computes the set of 
				``most advanced'' processes, i.e., the processes that, according to
				the outcome of the snapshot, complete the latest round with the
				smallest snapshot (line~\ref{line:fastest}.  
				Here $\textit{round-level}(p,S)$ denotes the maximal round-level
				reached by $p$ in $S$, i.e., the maximal value of the last element in
				$S[*,p]$.
				%Here $\textit{level}(p,r,S)$ denotes the 
				%smallest level reached by process $p$ in round $r$ ($n+1$ in case $r$ did not reach
				%round $r$ in $S$).  
				If the processes in the resulting \emph{front} set are about to finish
				round $r$, 
				then every such process is 
				``frozen'', i.e., it is put on hold and not simulated until it performs a physical step (in
				lines~\ref{line:move1} or~\ref{line:move2}). 
}

In the set of remaining processes, the simulator chooses  
the ``slowest'' \emph{non-blocked} and \emph{non-frozen} process (line~\ref{line:slowest}).  
To make sure that the notion of the slowest process is well-defined,
we introduce a total order on the tuples $(i, r, \ell)$, $i\in\Pi$, $r
\in \mathbb{N}$, $\ell \in \mathbb{N}_n$ as follows.
We say that $(i, r_i,\ell_i)<(j, r_j,\ell_j)$
if
$(r_i<r_j) \vee ((r_i=r_j) \wedge (\ell_i>\ell_j)) \vee
((r_i=r_j) \wedge (\ell_i=\ell_j) \wedge ((i+r_i) \mod n<(j+r_i) \mod n))$. 
This way $\textit{argmin}$ in line~\ref{line:slowest} returns a single process, 
ties are broken by choosing the process associated with the current
iteration (the association is done in round-robin).

The slowest process $p$ currently observed (by $i$) in round-level $(r,\ell)$
is then simulated using $p$'s next instance
of RAP, $\textit{RAP}_{p,r,\ell}$, which accepts either $1$ (exactly
$\ell$ processes have appeared on round-level $(r,\ell)$ in $S$) or 
$0$ (otherwise). If $\textit{RAP}_{p,r,\ell}$ returns $1$, $p$
outputs the set of $\ell$ processes in $(r,\ell)$ as its snapshot
in
round $r$, denoted $V_{pr}$, and then $p$ is promoted to round $r+1$ (lines~\ref{line:nextround}
and~\ref{line:output}).
If  $\textit{RAP}_{p,r,\ell}$ returns $0$, $p$
is promoted to level $\ell-1$ of the same round $r$
(line~\ref{line:nextlevel}).
Otherwise, if $\textit{RAP}_{p,r,\ell}$ is blocked, we mark the status
of $i$ as \emph{blocked} in $(r,\ell)$ (line~\ref{line:blocked}).  

\vspace{1mm}\noindent\textbf{Correctness intuition.}
Our algorithm tries to always promote the process that is the
``most left behind'' process to the front of the simulation, unless a process gets
\emph{blocked} or \emph{frozen}.

A process $i$ is blocked if two simulators
proposed two different values to some $\textit{RAP}_{i,r,\ell}$ , i.e.,
one simulator finds exactly $\ell$ processes in $(r,\ell)$ and,
thus, believes that $i$ should complete round $r$ by outputting the
$\ell$ processes, and the other
found strictly less processes  in $(r,\ell)$ and thus believes that $i$ should go one
level down in round $r$ and output a smaller snapshot.
A process is frozen if it produced a new snapshot in a round $r$ and all the 
processes appearing in this snapshot became aware of it. The intuition here is that, according to Proposition \ref{prop:fast-graph-fast},
strongly correct processes are frozen infinitely often.
Therefore, only a correct simulator $i$ may appear as strongly correct in the simulated
run: otherwise the corresponding simulated process $i$ would get
frozen after $i$ crashes and stay frozen forever (only $i$ can
``unfreeze'' itself in the simulation).
%A process is frozen if it, at some point, outputs the smallest snapshot in the
%appears at the front of the simulation. 
%(latest completed) round 
 
Intuitively, a process $i$ is blocked because another process appeared at
its round-level $(r,\ell)$ and two simulators disagreed whether the
other process was there or not: one simulator finds exactly $\ell$
processes at the level and the other strictly less processes. 
The last such process $p$ will now be considered the
slowest process in the simulation and, thus, will be chosen to be
promoted in line~\ref{line:slowest} by any other simulator. Note that $p$ cannot be blocked
in $(r,\ell)$,
because every simulator that found $p$ in $(r,\ell)$ will also find
exactly $\ell$ processes in $(r,\ell)$. This is because $p$ is the
last process to reach $(r,\ell)$. 
Moreover, $p$ completes iteration $r$ having $i$ in its snapshot: since
$p$ completes $r$ in level $\ell$ reached by $i$,  $p$ \emph{sees}
$i$ in round $r$ in the simulated run. 
By repeating this reasoning inductively, 
even though $i$ is blocked, another process $p$ carries
this information to the ``front'' of the simulation, thus
making sure that every other simulated process will eventually be
aware of round $r$ of $i$.
Process $i$ unblocks itself by completing its own  $\textit{RAP}_{i,r,\ell}$
an thus providing it with a non-$\bot$ output.

%Now suppose that a process $p$  is observed to have the highest
%round-level $(r,\ell)$ just before terminating round $r$
%($p$ is witnessed to be at the front of the simulation by some
%simulator $i$). 
%Note that $p$ is going to be witnessed in $r$ by every process
%that passes round $r$ and, thus, by every fast process.
%Process $p$ ``unfreezes'' by taking a physical step in the simulation (in
%lines~\ref{line:move1} or~\ref{line:move2}).   

\SetKw{KwShared}{Shared:}
\SetKw{KwLocal}{Local:}
\begin{algorithm}[t]
{\scriptsize
%\footnotesize
\KwShared{$R[1],\ldots,R[n] := [\bot,\ldots,\bot],\ldots,[\bot,\ldots,\bot]$\;}
\KwShared{$\textit{Counter}_1,\ldots,\textit{Counter}_n  := 0,\ldots,0$\;}
%\KwShared{$F[1],\ldots,F[n]  :=  [0,\ldots,0],\ldots,[0,\ldots,0]$\;}
\KwLocal{$\countf[1,\ldots,n]  :=  [0,\ldots,0]$} \tcp*[r]{ counters for
  ``frozen'' processes}	    
\KwLocal{$\lastf[1,\ldots,n]  :=  [0,\ldots,0]$}  \tcp*[r]{last rounds
  in which processes were ``frozen''} 
\vspace{3mm}
       $R[i,i] := (runm0,n)$ \label{line:init} \tcp*[r]{start with highest level of
  the first iteration}
       \While{\textit{true}}{
              $\textit{Counter}_i ++$\label{line:move1}\; 
              $S :=$  snapshot of
              $R[1],\ldots,R[n]$ \label{line:snapshot}\;
 
           \eIf{$i$ is blocked in $S$}
              {          
              			$p := i$ \label{line:ifblocked}\;
              }
              {
			              \For{each $j \in \Pi$}{
			                  %$x :=$ the last round of $j$ such that  $V_{jx}$ is witnessed by all the processes in $V_{jx}$ in $S$ \label{line:condFreeze}\;
				              %$x :=$ the largest round such that $V_{jx}$ are aware of round $x$ of all processes in $V_{jx}$ (in $S$) \label{line:condFreeze}\;
				              $x :=$ the largest round such that $V_{jx}$ are aware of round $x$ of ${j}$ (in $S$) \label{line:condFreeze}\;

                                        \label{line:condFreeze1}\If{$x > \lastf[j]$} 
				              {
					              $\lastf[j] := x$ \;
					              $\countf[j] := \textit{Counter}_j$ \tcp*[r]{freeze $j$}
					              
					          }
			             }
			              \Repeat{$\textit{cands}\neq\emptyset$}{
			              $\textit{cands} := \{ j ~|~ j \text{ is not blocked and }
			              \textit{Counter}_j > \countf[j]
                                      \}$\label{line:cands}  \tcp*[r]{
                                      ignoring non-participants}	    
			              $\textit{Counter}_i ++$\label{line:move2}\;
			              }  
			              $p := \argmin _{j\in\textit{cands}} (\textit{round-level}(j,S), (j+\textit{round}(j,S))\mod n)$ \label{line:slowest}\tcp*[r]{choose the ``most-behind'' process}	                     }        

			   $(r,\ell) := \textit{round-level}(p,S)$
			   \tcp*[r]{compute current round and level of $p$}\label{line:retrieveLevel}
			   $U := \{j ~|~  (*,r,\ell)\in S[*,j]\}$ \tcp*[r]{all
			   processes reached $(r,\ell)$} 
			   $v := \textit{RAP}_{p,r,\ell}(|U|=\ell)$ \tcp*[r]{the result of next step of $p$}\label{line:doRAP}
	              
              \eIf {$v=\true$} { 
                $R[i,p] := S[i,p]\cdot (run,r+1,n)$ \label{line:nextround}\tcp*[r]{$p$ completes
                  round $r$}  
                  $V_{pr} := U$\label{line:output} \tcp*[r]{output
                the snapshot of $p$ in round $r$}
              }
              {\eIf{$v=\false$}{$R[i,p] := S[i,p]\cdot (run,r,\ell-1)$ \label{line:nextlevel}
                  \tcp*[r]{$p$ proceeds to $(r,\ell-1)$} }
                 { $R[i,p] :=  S[i,p] \cdot (blocked,r,\ell)$
                    \label{line:blocked} \tcp*[r]{$p$ blocks in $(r,\ell)$}}
              } 

        }
}
\caption{The $\textit{AS}\rightarrow\textit{IIS}$ simulation algorithm: code for process $i$.}
\label{alg:lf}
\end{algorithm}

Thus, intuitively, a correct process
$i$ either gets blocked infinitely often or gets frozen infinitely
often. In both cases, $i$ is ``seen'' infinitely often by other correct
processes. 
Moreover, a faulty process is eventually either (i) gets faulty of
frozen forever, or (ii) becomes invisible to the remaining processes
in the simulated run. In both cases, the faulty process does not
appear strongly correct in the simulation.  
%At the same time, a faulty process will eventually block or
%freeze forever. 
Thus:

\begin{theorem}
\label{th:lf}
 Algorithm~\ref{alg:lf} provides a simulation of the IIS model in the
AS model such that, for each run $E$, the
 simulated run $E'$ satisfies (1) $\correct(E)=\strcorrect(E')$, and 
(2) $\forall i\in\correct(E)$: $\part(E)=\part(E',i)$.
\end{theorem}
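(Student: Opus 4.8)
The plan is to separate a safety claim---that the per-simulator output sequences $U_{i,1},U_{i,2},\ldots$ are the snapshot sequences of a genuine IIS run $E'$ reflecting the inputs of $E$---from the characterization claims (1)--(2). For safety I would first use the RAP guarantees (i)--(iv) to show that the simulators never disagree on a simulated step: for each triple $(p,r,\ell)$ every $\textit{RAP}_{p,r,\ell}$ instance that returns a non-$\bot$ value returns the same value, so the progress appended to $R[\cdot,p]$ and the snapshot $V_{pr}$ output in line~\ref{line:output} are well defined independently of the simulator. I would then verify, round by round, that the sets $\{V_{pr}\}$ satisfy self-inclusion, containment and immediacy and that $S_1\supseteq S_2\supseteq\cdots$; this is inherited from the recursive level structure of the immediate-snapshot construction of~\cite{BG93a}, with $\textit{RAP}_{p,r,\ell}$ playing the role of the atomic test ``are exactly $\ell$ processes present at level $\ell$?''. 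Since at most $\ell$ processes ever reach level $\ell$ and the last arrival forces the count to be exactly $\ell$, each simulated round closes with a legitimate immediate-snapshot partition.

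Next I would establish global progress. The \textbf{Repeat} loop of lines~\ref{line:cands}--\ref{line:move2} cannot spin forever at a correct simulator, because an unblocked, unfrozen process with a pending step always exists (in the worst case the simulator's own process), so $\textit{cands}$ eventually becomes non-empty and the currently most-behind process is promoted through its RAP. The resolver property (iii) is essential here: a process $i$ blocked in $(r,\ell)$ is the resolver of $\textit{RAP}_{i,r,\ell}$, so when simulator $i$ takes its next physical step it drives that instance to a non-$\bot$ value (line~\ref{line:ifblocked} feeding line~\ref{line:doRAP}) and unblocks itself. Combined with the round-robin tie-breaking of the order on $(i,r,\ell)$, this guarantees that every correct simulator is promoted through infinitely many rounds, so $\correct(E)\subseteq\ipart(E')$.

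The heart of the argument is $\correct(E)\subseteq\strcorrect(E')$. Fix a correct $i$ and split on the dichotomy of the correctness intuition. If $i$ is frozen infinitely often then, by the freezing test of line~\ref{line:condFreeze}, for infinitely many rounds $x$ every process of $V_{ix}$ is aware of round $x$ of $i$, i.e.\ $G_{E'}^{(x)}$ contains a path from each process of $V_{ix}$ to $i$; this is exactly the condition of Proposition~\ref{prop:fast-graph-fast}, so $i\in\strcorrect(E')$. If instead $i$ is blocked infinitely often, say in $(r,\ell)$, I would run the ``carry to the front'' induction: the last process $p$ to reach $(r,\ell)$ cannot be blocked there, because any simulator that sees $p$ at $(r,\ell)$ counts exactly $\ell$ processes; hence $p$ completes round $r$ with $i$ in its snapshot, giving an edge $(p,i)$ of $G_{E'}^{r}$, and by the containment property every process that later sees $p$ also reaches $i$. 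Iterating toward the most-advanced processes yields a path from every process of $V_{ir}$ to $i$ for infinitely many $r$, and Proposition~\ref{prop:fast-graph-fast} again gives $i\in\strcorrect(E')$. I expect this propagation argument to be the main obstacle, since it requires showing that the uniquely-determined last arrival $p$ is exactly the process selected by the $\argmin$ of line~\ref{line:slowest} and that its snapshot transports $i$'s round information all the way to the front.

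Finally, for $\strcorrect(E')\subseteq\correct(E)$ I would show that a faulty $i$ is never strongly correct: once simulator $i$ crashes, $\textit{Counter}_i$ stops increasing, so from the first time $i$'s latest snapshot becomes fully aware it is frozen forever---only a physical step of simulator $i$ can make $\textit{Counter}_i>\countf[i]$ again (line~\ref{line:cands})---whence $i$ is promoted only finitely often, so $i\notin\ipart(E')$ and a fortiori $i\notin\strcorrect(E')$. Property (2) follows from the same machinery: every participant of $E$ executes line~\ref{line:init} and so opens round $1$, and the carrying argument propagates each participant's first round to the front, giving $\part(E)\subseteq\part(E',i)$ for correct $i$; the reverse inclusion holds because, by RAP validity, $i$ can become aware only of processes that actually took a step, i.e.\ of participants of $E$.
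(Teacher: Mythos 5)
Your overall architecture (safety via the IS construction plus RAP agreement, termination of the \textbf{repeat} loop via the resolver property, the two inclusions, and participation via line~\ref{line:init}) matches the paper's, and your treatment of $\strcorrect(E')\subseteq\correct(E)$ is essentially the paper's argument. The genuine gap is in $\correct(E)\subseteq\strcorrect(E')$. First, the dichotomy ``frozen infinitely often or blocked infinitely often'' is lifted from the paper's \emph{correctness intuition} paragraph and is not exhaustive as stated: a correct process could a priori be promoted forever without any RAP conflict while some process appearing in its snapshots never acquires a return path to it, in which case it is neither blocked nor frozen infinitely often. Since the freezing test of line~\ref{line:condFreeze} is precisely the condition of Proposition~\ref{prop:fast-graph-fast}, ruling out this third case is essentially the content of the theorem, so taking the dichotomy as the case split begs the question. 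Second, your blocked case does not deliver what Proposition~\ref{prop:fast-graph-fast} requires. The carry-to-the-front induction shows that \emph{some} chain $p,s,t,\ldots$ transports awareness of $i$'s round $r$ toward the most advanced simulated round; but the proposition quantifies over \emph{every} process of $V_{ir}$, and a process $k\in V_{ir}$ that completed round $r$ at a level below $\ell$ (so $V_{kr}\subsetneq V_{ir}$ and $i\notin V_{kr}$) lies off that chain, and nothing in your argument gives $k$ a path back to $i$.

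The paper avoids both problems by proving a different statement by contradiction: for all correct $i,j$ and all rounds $r\ge r_0'$, $i$ eventually becomes aware of round $r$ of $j$. The key device, which your sketch omits, is the choice of the round $r'$ (the first round after $r_i$ in which $i$ has the lowest round-robin priority): before any simulator lets $i$ complete $r'$ at level $\ell'$, every non-blocked, non-frozen process must be pushed past $(r',\ell')$, so the carrier chain is forced to intersect $V_{ir'}$ and $i$ itself---not merely the front-runners---becomes aware of round $r$ of $j$. Without this device the propagation argument never closes the loop back to the slow or blocked process, which is exactly the obstacle you flag at the end of your third paragraph but do not resolve.
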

\begin{proof}
Take any run $E$ of Algorithm~\ref{alg:lf}. 
Recall that the simulated run $E'$ is defined as a collection of all sets
$V_{ir}$, $i=1,\ldots,n$, $r\in\Nat$ produced in $E$. 
%The correctness of snapshots of $E'$ in every round $r$ follows from
By the correctness of the IS simulation~\cite{BG93a} and the use of the RAP
agreement protocol~\cite{GK11} for each atomic snapshot taken in the simulation
of~\cite{BG93a}, we conclude that for all $r$, all sets $V_{ir}$ satisfy
containment, self-inclusion and immediacy (defined in Section~\ref{sec:model}).  
Notice that by the algorithm, every correct process $i$ produces a
snapshot $V_{ir}$ in every iteration $r$.  

%\vspace{1mm}\noindent\textbf{Every fast process is correct.}
%According to Proposition \ref{prop:fast-second-fast}, $\strcorrect(E')=SC'(E')$.
%Hence, the fast processes are those that are infinitely frozen by Algorithm~\ref{alg:lf}.
%Assume for the sake of contradiction that $i \in \strcorrect(E')$ but $i \not\in correct(E)$.
%Let $r$ be some round such that $V_{ir}$ is simulated after the crash of $i$.
%Since $i \in SC'(E)$, there exists $r'\geq r$ where all the processes
%$V_{ir'}$ see $i$ in $r'$.
%This fast will be eventually detected by all processes so they freeze $i$.
%But since $i$ has already crashed, it will never unfreeze and cannot be simulated after that.
%This contradicts the assumption that $i \in \strcorrect(E')$. Therefore, $\strcorrect(E') \subseteq \correct(E)$.

\vspace{1mm}\noindent\textbf{Every strongly correct process is correct.}
Assume for the sake of contradiction that $i \in \strcorrect(E')$ but $i \not\in live(E)$.
%Observe that the processes that appear infinitely often in the snapshot of a fast process
%are necessarily fast also.
Define $r_0$ to be the first simulated round of $E'$  such that in all
$r\geq r_0$, 
(i) only processes of $\ipart(E')$ are simulated and (ii) $V_{ir}$ contain only strongly correct processes.
$r_0$ is well defined since the processes that appear infinitely often in the snapshots of strongly correct processes
are necessarily also strongly correct.

Take a round $r\geq r_0$ where $V_{ir}$ is simulated after the crash 
of $i$ in $E$ (recall that $i \not\in \correct(E)$).
Since $i$ is strongly correct, all the processes in $\ipart(E')$ (including
$V_{ir}$) will eventually be aware of round $r$ of $i$.
%which are fast and thus in $\ipart(E')$ also.
But the fact that the processes in $V_{ir}$ are strongly correct means that the
processes of $\ipart(E')$ are aware of infinitely many of their rounds.
Therefore, every process in $\ipart(E')$ eventually knows that 
the processes in $V_{ir}$ were aware of a round of $i$.
Hence, $i$ will be frozen by all of them. 
But since it has already crashed in $E$, it will never be unfrozen and 
cannot be simulated after $r$, contradiction.
Consequently, $i \in \correct(E)$ and $\strcorrect(E') \subseteq \correct(E)$.

\vspace{1mm}\noindent\textbf{Every correct process is strongly correct.} 
By Proposition \ref{prop:fast-graph-fast}, 
there exists a round $r_0$ such that for all $r \geq r_0$, the processes of $V_{ir}$
are aware of round $r$ of $i$  iff $i$ is strongly correct.
Hence, if a process is not strongly correct, 
the condition of line \ref{line:condFreeze1}) can apply to it only 
a finite number of times.
Thus, there exists a round $r_0^\prime$ 
such that every process that is frozen after 
it reaches $r_0^\prime$ is necessarily strongly correct.

%By Proposition \ref{prop:fast-graph-fast}, 
%every process that is frozen 
%infinitely often is necessarily fast. \textbf{PK: explain why?}
%Therefore, there exists a round $r_0$ such that a non-fast process cannot be frozen after it reaches $r_0$.

%Since $SC'(E')=\strcorrect(E')$, by Proposition \ref{prop:fast-second-fast}, every process that is frozen 
%infinitely often is necessarily fast.
%Hener, there exists a round $R$, such that no process that is not fast cannot be frozen after it reaches $R$.

Now we show that  $\correct(E)\subseteq\strcorrect(E')$.
Suppose not, i.e., 
there are processes $i,j\in \correct(E)$ and a round 
$r\geq r_0^\prime$
such that $i$ is never aware of round $r$ of $j$ in $E'$.
Since $r\geq r_0^\prime$, $j$ cannot be frozen by $i$.
Let $r_i$ be the round of process $i$ at the
moment when $j$ completes round $r$, i.e., outputs
$V_{rj}$ (line~\ref{line:output}).
 
Take $r'$ to be the first round greater than $r_i$, such that
$(i+r')\mod n+1=n$, i.e., $r'$ has the lowest priority in round $r'$.
Thus, before $i$ is simulated at some level $\ell'$ of $r'$, 
any other process that is not frozen or blocked 
must have competed its simulation of round $r'$ or reached level lower
than $\ell'$.

Let $\ell'$ be the level at which $i$ obtains its snapshot in $r'$
and let $m$ be some simulator that simulated $V_{ir'}$.
%Note first that $j$ must get a snapshot in $r$ at a level that is strictly smaller than $\ell'$, 
%otherwise, $j$ would belong to $V_{ir}$, contradiction.

%
%Let $(r',\ell')$ be the iteration-level pair of process $i$ at the
%moment when $j$ completes round $r$ (say at level $\ell$), i.e.,
%evaluates its last entry in $R_{jj}$ to $(r,\ell)$ and outputs
%$V_{rj}$ (line~\ref{line:output}). 

We observe first that $(r,\ell)<(r',\ell')$: otherwise, $i$ will
eventually reach level $\ell''\geq\ell'$ of round $r$, find exactly
$\ell''$ processes (including $j$) at that level, and output its
snapshot $V_{ir}$ such that $j\in V_{ir}$---a contradiction with the
assumption that $i$ is never aware of round $r$ of $j$.

%By the algorithm, in its next iteration $i$ must choose the slowest
Consider the time after $i$ reaches $(r', \ell')$ and before it obtains the snapshot $V_{ir'}$.
By the algorithm, the simulator $m$ must choose the slowest
non-blocked and non-frozen process to simulate. 
Suppose that $j$ is never observed blocked by $m$ after $i$ reaches $(r',\ell')$.
Since $j$ cannot be frozen by $m$ after $r_0^\prime$, the algorithm guarantees that 
eventually, 
$m$ would bring $j$ to level
$(r',\ell')$ and, thus, simulates a snapshot $V_{ir'}$ such that $j\in V_{ir'}$---a
contradiction.

Now suppose that $m$ observes $j$ as blocked in round $r$ or
later. Without loss of generality, suppose that $j$ is observed as
blocked by $m$ in round $r$. (Indeed, if $i$ is never aware of round
$r$ of  $j$ in, it is never aware of any later round of $j$.)

We claim that at the moment the first simulator took its snapshot on behalf of $j$ for round $r$ (in
line~\ref{line:snapshot}), there was another blocked process $k$
reached $(r,\ell)$ that later was observed as resolved by 
another simulator.
Indeed, the only reason for $j$ to block in $\textit{RAP}_{j,r,\ell}$
is that there is another simulator proposing a conflicting  set of
processes that have been observed to reach $(r,\ell)$.
%The atomic snapshots taken in line~\ref{line:snapshot} are related
%by containment~\ref{AADGMS93}.
Moreover, by the algorithm, since the simulators proposed different values to
$\textit{RAP}_{j,r,\ell}$ 
one of these sets contains exactly $\ell$ processes and
the other contains strictly less. 
Consider any process in the difference between these two snapshots
(the atomic snapshots taken in line~\ref{line:snapshot} are related by
containment~\cite{AADGMS93}).  
%Since both these snapshots contain $i$.
Every such process was considered blocked by one of the simulators at the moment it
took its snapshot in line~\ref{line:snapshot}, otherwise it would
appear in all obtained snapshots or would be chosen to be simulated
as slower process.
For the last such process $s$ to reach level $(r,\ell)$,
$\textit{RAP}_{s,r,\ell}$ cannot get blocked, because all simulators
will propose exactly $\ell$  processes that
reached $(r,\ell)$. 
Thus, $s$ obtains $V_{sr}$ such
that $j\in V_{sr}$ and enters round $r+1$. 
%Note that $(s, j) \in G^{r}_{E'}$.

By our assumption, $(r+1,n)<(r',\ell')$ and $i$ is never aware of round $r+1$ of $s$.
Therefore, $s$ is not strongly correct and cannot be frozen as $r+1 \geq r_0^\prime$.
Moreover, $s$ does not block in round $r$, thus $m$ should eventually try simulating $s$
in round $r+1$. 
By repeating the argument inductively, we locate a process $t$ that
%reaches round $r+2$ with $(t, s) \in G^{r+1}_{E'}$.
reaches round $r+2$ and is aware of  round $r+1$ of $s$.

Eventually,  some process that is aware of round $r$ of $j$ will reach 
$(r', \ell')$, and thus will appear in $V_{ir'}$.
Therefore, $i$ is aware of round $r$ of  $j$---a contradiction.

Finally, since every process starts the algorithm by registering its
participation at level
$(0,n)$ (line~\ref{line:init}),  
the set of participating processes in $E$ is automatically the
participating set for every correct process in $E'$.
\end{proof}

\section{From IIS to AS: identical snapshots and helping}
\label{sec:fl}

We now describe our $\textit{IIS}\rightarrow\textit{AS}$ algorithm that, in any run of the IIS model,
simulates a run of the AS model in which every process
alternates updates with atomic snapshots~\cite{AADGMS93}.

As a basis, we take the non-blocking simulation proposed by Borowsky and Gafni~\cite{BG97}.
In this algorithm, each process $i$ maintains a local \emph{counter} vector $C_i[1,\ldots,n]$
where each $C_i[j]$ stores the number of simulated snapshots of $j$ \emph{as
currently witnessed by $i$}. 
To simulate a snapshot operation, process $i$ accesses the
iterated memories, writing its counter vector $C_i$, taking a snapshot
of counter vectors of other processes, and updating each position 
$C_i[k]$ with the maximal  value of $C_j[k]$ across all counter
vectors read in the iteration.
In each iteration $r$ of the IIS memory, this is expressed as a single
$\textit{WriteRead}_r(C_i)$ operation the outcome of which satisfies the
self-inclusion, containment, and immediacy properties specified in Section~\ref{sec:model}.
If all these vectors are identical, $i$ outputs the vector as the
result of its next snapshot operation.     
Initially and each time a process $i$ completes its next snapshot operation, it
simulates an update operation by incrementing $C_i[i]$. 

We first observe that the original simulation of the
AS model proposed in~\cite{BG97} is, in the worst case, only
non-blocking. Indeed, it admits runs in which some strongly correct process is never able
to complete its snapshot operation, even though ``noticed''
infinitely often.     
Consider, for example, the following IIS run: $[\{1\}\{2,3\}]$, $[\{3\},\{1,2\}]$,
$[\{1\}\{2,3\}]$, $\ldots$, i.e., all the three processes are strongly correct and
in every iteration, one of the processes in $\{1,3\}$ only sees
itself and, thus, completes its new snapshot.
Thus, in every round one of the processes in $\{1,3\}$ outputs a new
snapshot, while the remaining process $2$ sees two different vectors 
and thus does not complete its simulated snapshot. 
As a result, process $2$ never manages to completes its first snapshot in the
simulated AS run, even though it is strongly correct!

To fix this issue, we equip the algorithm of~\cite{BG97} with a helping
mechanism, similar to the helping mechanism proposed in the atomic
snapshot simulation in~\cite{AADGMS93}. 
In addition to its counter vector, in each iteration of our
Algorithm~\ref{alg:fl}, 
a process also writes the
result of its last snapshot: $\textit{WriteRead}_r(C_i)$ (line~\ref{line:is}). 
Now a process $i$ outputs a new snapshot not only if it sees that
everybody agrees on the clock vector, but also if another process
produces a snapshot containing  $i$'s latest counter value.  

\begin{algorithm}[t]
{\scriptsize
%\footnotesize
%        $k := 1$
%       
      $C_i[1,\ldots,n] := [0,\ldots,0]$; $C_i[i] := 1$; $r := 0$; $SI_i := [0,\ldots,0]$\;
       \While{\textit{true}}{
            $r++$\; 
         $S :=  \textit{WriteRead}_r(C_i,SI_i)$\;  \label{line:is}
         \If {$\exists SI$ such that ($\forall (C_j,SI_j)\in S:
           C_j=SI)$  \textbf{or}  ($\exists (C_j,SI)\in S: SI[i]=C_i[i]$)}   
              { $SI_i := SI$\;
                output $SI$\label{line:nextsnapshot}\tcp*[l]{Output the next atomic snapshot} 
              $C_i[i]++$\;}
           $C_i := \max(C_1,\ldots,C_n)$ \tcp*[l]{Adopt the maximal
                counter value for each process $j$} 
        }
}
\caption{The $\textit{IIS}\rightarrow\textit{AS}$ simulation algorithm: code for process $i$.}
\label{alg:fl}
\end{algorithm}

%Due to the space limitation, we delegate the proof of correctness to the appendix.

\begin{theorem}
\label{th:fl}
 Algorithm~\ref{alg:fl} provides a simulation of the 
 AS model in the IIS model such that, for each run $E$ in IIS, the
 simulated run $E'$ satisfies (1) $\strcorrect(E)=\correct(E')$ and (2)$\forall
 i\in\strcorrect(E)$: $\part(E,i)=\part(E')$.
%
%For each execution $E$ of Algorithm~\ref{alg:fl}, every fast process $i$
%computes an infinite sequence of snapshots $V_{ir}$ that
%all these snapshots belong to an execution $E'$ of the standard atomic-snapshot model.
%Moreover, $\strcorrect(E)=\correct(E')$, and for every process
%$i\in\strcorrect(E)$, $\parts(E,i)=\parts(E',i)$ . 
\end{theorem}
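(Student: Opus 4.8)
The plan is to fix a run $E$ of Algorithm~\ref{alg:fl}, let $E'$ be the sequence of snapshots output in line~\ref{line:nextsnapshot}, and verify in turn that (a) $E'$ is a legal AS run (safety), (b) $\strcorrect(E)\subseteq\correct(E')$, (c) $\correct(E')\subseteq\strcorrect(E)$, and (d) the participating-set condition~(2). For safety I would treat the counter vectors $C_i$ as vector timestamps and invoke the correctness of the underlying non-blocking construction of~\cite{BG97}: the $\textit{WriteRead}_r$ operation guarantees self-inclusion, containment, and immediacy, so the counters it returns are totally ordered by the componentwise order and every output is a genuine atomic snapshot of the simulated memory. The only new ingredient is the helping disjunct in line~\ref{line:is}; here I would check that when $i$ adopts a foreign snapshot $SI$ with $SI[i]=C_i[i]$, the value $SI$ is a real snapshot taken by a process that had already observed $i$'s latest update, so it can be linearized at that process's snapshot point without violating $i$'s own write/snapshot order. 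Since counters never decrease and each process adopts values dominating its previous ones, the per-process outputs are monotone and the whole collection is consistent with a single AS run.

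\noindent\textbf{Every strongly correct process is correct (b).}
I would first establish non-blocking: at least one process of $\strcorrect(E)$ completes infinitely often. By Proposition~\ref{prop:fast-graph-fast} and the sink property of $G^*_E$, there is a round $r^*$ after which every strongly correct process sees only strongly correct processes. Suppose, for contradiction, that after some round no strongly correct process completes. Then every $C_k[k]$, $k\in\strcorrect(E)$, is frozen; since these processes appear in one another's views infinitely often, the $\max$-update propagates the frozen values until all hold a common vector $C^*$, and in the next round any of them reads only $C^*$, satisfies the first disjunct of line~\ref{line:is}, and completes---a contradiction. This closed-world argument in fact shows completions recur, and since a helping completion merely re-publishes a previously produced all-agree snapshot (whose entries would otherwise be bounded), infinitely many all-agree completions occur; thus some $j\in\strcorrect(E)$ outputs its full counter $C_j$ infinitely often. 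Now assume some $i\in\strcorrect(E)$ completes only finitely often, so $C_i[i]$ stabilizes at $c_i$. As $j$ is aware of $i$ infinitely often, $j$ eventually holds $C_j[i]=c_i$ permanently, so its all-agree outputs satisfy $SI[i]=c_i$; as $i$ is aware of $j$ infinitely often, $i$ eventually reads such an output, the second disjunct of line~\ref{line:is} fires, and $i$ completes---contradiction. Hence $\strcorrect(E)\subseteq\correct(E')$.

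\noindent\textbf{Every correct process is strongly correct (c).}
A process that is not infinitely participating takes finitely many steps, hence completes finitely often; so let $i\in\ipart(E)\setminus\strcorrect(E)$ and suppose $i$ completes infinitely often. The key combinatorial observation---and the part I expect to be the main obstacle---is that after $r^*$ every non-sink infinitely-participating process reads a sink process in every round. Indeed, for a sink process $k$ we have $i\notin V_{kr}$ after $r^*$; containment and self-inclusion then force $V_{kr}\subsetneq V_{ir}$, whence $k\in V_{ir}$, so $i$ reads $k$. Since no sink process ever sees $i$ again, $C_k[i]$ is frozen at some value $\le\beta_i<\infty$. I would then show that no output snapshot can carry a value exceeding $\beta_i$ in position $i$: every output is either an all-agree snapshot $C_q$ or a verbatim copy of an earlier output, so it suffices to bound the all-agree ones, and any all-agree completion after $r^*$ is performed by a $q$ that reads some sink $k$, forcing $C_q[i]=C_k[i]\le\beta_i$ (the finitely many completions before $r^*$ add only a finite bound $B$). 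Consequently every published snapshot has $i$-entry at most $B$. But $i$ can complete at counter value $m$ only if some published snapshot has $i$-entry exactly $m$, which caps $C_i[i]$ at $B+1$. Thus $i$ completes finitely often, a contradiction, giving $\correct(E')\subseteq\strcorrect(E)$.

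\noindent\textbf{Participating sets (d).}
Finally I would observe that the simulation faithfully relays participation: a process registers by initializing $C_j[j]:=1$, and this initial value reaches a strongly correct process exactly when that process becomes aware of $j$'s first round in $E$, while no value is ever fabricated for a non-participant. Combined with parts (b)--(c), for every strongly correct $i$ the set of processes $i$ is aware of in their first round coincides with the participant set of the simulated run, which is property~(2).
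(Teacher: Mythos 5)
Your decomposition (safety, two liveness inclusions, participation) follows the same route as the paper's proof, but you work out the liveness direction in far more detail than the paper does: the paper disposes of $\strcorrect(E)\subseteq\correct(E')$ and $\correct(E')\subseteq\strcorrect(E)$ in two unsupported sentences, whereas your part~(c) --- bounding the $i$-entry of every published snapshot by the frozen value $\beta_i$ held in the sink component, via the observation that after some round every participant of a round directly reads a sink process --- is a genuine argument that the paper only asserts. Part~(a) matches the paper's construction (ordering outputs by containment and inserting the update operations), and part~(d) is at the same level of detail as the paper's.

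There is, however, one soft spot in part~(b) that you should repair. When you conclude ``as $i$ is aware of $j$ infinitely often, $i$ eventually reads such an output,'' you are using \emph{awareness}, which is transitive (a path in $G^{(r)}_E$), where the helping disjunct of line~\ref{line:is} requires something stronger: the pair $(C_j,SI_j)$ with $SI_j[i]=C_i[i]$ must appear \emph{directly} in $i$'s immediate snapshot $S$. Unlike the counter vectors, which are relayed by the $\max$-update and therefore do propagate along awareness paths, the $SI$ components are \emph{not} relayed --- an intermediate process $k$ writes its own $SI_k$, not $j$'s --- so the fact that some strongly correct $j$ publishes $SI_j[i]=c_i$ infinitely often does not yet imply that $i$ ever sees such a pair. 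The gap is fillable with machinery you already use in part~(c): by pigeonhole some fixed set $K=S_r^1$ is the minimal view for infinitely many rounds $r$, this $K$ is contained in $\strcorrect(E)$, every participant of such a round (in particular $i$) reads all of $K$ directly, and your closed-world argument shows the members of $K$ perform all-agree completions --- hence publish $SI$ with $i$-entry $c_i$ once $c_i$ has propagated to them --- infinitely often in exactly those rounds. Routing the help through $K$ rather than through an arbitrary frequently-completing $j\in\strcorrect(E)$ closes the argument; without it, the step as written does not follow. (For what it is worth, the paper's own one-line justification elides the same distinction.)
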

\begin{proof}
Consider any run $E$ of Algorithm~\ref{alg:fl}.
First we observe that all atomic snapshots of the simulated processes output
in $E$ are all related by containment, i.e., for
every two snapshot $U$ and $U'$ output in the algorithm in line~\ref{line:nextsnapshot}, we have
$U\leq U'$ or $U'\leq U$, when the two vectors are compared position-wise. 
Indeed, for every output snapshot $U$, there is a round $r$ and a process
$i$, such that all processes that appear in $i$'s immediate snapshot
in round $r$ have put $U$ as their clock vectors.
Since in the algorithm the clock vector $C_i$ is maintained to have the
maximal value seen so far for every process $j$ and by the containment property of immediate snapshot, every process that
took the immediate snapshot in round $r$ or later will compute a 
clock vector $U'\geq U$.  

Therefore, we order all atomic snapshots output in $E$ based on the
containment order, let $U_1,U_2,\ldots$ be the resulting sequence
(here $U_{\ell}\leq U_{\ell+1}$ for each $\ell=1,2,\ldots$). 
Then for each $\ell=1,2,\ldots$ and for each process $i,\;
U_{\ell+1}[i]\neq U_{\ell}[i]$, we add an update operation in which $i$ increments its
counter (initially $1$) and writes the result to position $i$ in the memory just
before $U_{\ell+1}$. 
Notice that since a process only increments its counter after it has
output a snapshot, $U_{\ell+1}[i]\neq U_{\ell}[i]$ implies that $U_{\ell+1}[i]=U_{\ell}[i]+1$.

We call the resulting sequence $E'$ and observe that
it is a run of the AS model. Indeed,  the snapshots taken in $E'$ are related by
containment and, by construction, each snapshot returns the latest
written value for each process.  By construction, $E$ and $E'$ agree on the sequence of snapshots taken
by every given process. 
Moreover, since the clock vector of process $i$ contains the most
up-to-date value for every other process and in the first step each
process simply writes its initial clock vector in the memory, the set
of participating processes as observed by $i$ in $E$ is the same as
the set of participating processes observed by $i$ in $E'$.
Thus, $E$ is an AS run, and Algorithm~\ref{alg:fl} simulates AS in IIS.

Every update of the counter of a strongly correct process $i$ eventually appears
in the snapshot of every other strongly correct process. Thus, every simulated snapshot of a strongly correct
process eventually completes and $\part(E,i)=\part(E')$. If a process is not
strongly correct, it eventually blocks in trying to complete its snapshot. Thus,
$\strcorrect(E)=\correct(E')$.
\end{proof}

\section{Conclusion}

This paper presents two simulation algorithms that, taken together,
maintain the equality between the set of \emph{correct} processes in
AS and the set of \emph{strongly correct} processes
in IIS. 
This equality enables a strong equivalence relation between AS and IIS
\emph{sub-models}: an object is implementable in an adversarial sub-AS model
if and only if it is implementable in the corresponding adversarial
sub-IIS model. 
The result holds regardless of whether the object is one-shot, like a
distributed task, or long-lived, like a queue or a counter. (Naturally,
in IIS, we guarantee liveness of object operations to the strongly
correct processes only.)   
The equivalence presented in this paper motivates the recent topological characterization of
task computability in sub-IIS models~\cite{GKM14-podc} and suggests
further exploration of iterated models that capture, besides
adversaries~\cite{DFGT11}, the use of generic tasks like the M\"obius
task~\cite{GRH06} 
or of a task from the family of $0$-$1$ exclusion~\cite{Gaf08-01}.

%Therefore, the simulations can be used to  establish equivalence
%between any models expressed via sets of correct, resp. strongly correct, processes.   

%\newpage
{\small
\bibliography{references}
}

%\newpage

\end{document}